\newcommand{\orcid}[1]{\href{https://orcid.org/#1}{\includegraphics[width=8pt]{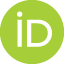}}}
\newtheorem{theorem}{Theorem}
\begin{document}

\title{Characterization of  $n$-Dimensional Toric and Burst-Error-Correcting Quantum Codes from Lattice Codes}

\author{Cibele Cristina Trinca\thanks{The author is with the Department of Biotechnology and Bioprocess Engineering, Federal University of Tocantins, Gurupi-TO, Brazil (e-mail: cibtrinca@yahoo.com.br), \orcid{0000-0003-2857-7410}.}, Reginaldo Palazzo Jr.\thanks{The author is with the School of Electrical and Computer Engineering, State University of Campinas, Brazil (e-mail: palazzo@dt.fee.unicamp.br), \orcid{0000-0002-3219-8061}.}, J. Carmelo Interlando\thanks{The author is with the Department of Mathematics and Statistics, San Diego State University, San Diego, CA, USA (e-mail: carmelo.interlando@sdsu.edu), \orcid{0000-0003-4928-043X}.}, \\Ricardo Augusto Watanabe\thanks{The author is with the School of Mathematics, Statistics and Scientific Computing, State University of Campinas, Brazil (e-mail: ricardoaw18@gmail.com), \orcid{0000-0001-7957-4805}.}, Clarice Dias de Albuquerque\thanks{The author is with the Science and Technology Center, Federal University of Cariri, Juazeiro do Norte, Brazil (e-mail: clarice.albuquerque@ufca.edu.br),\orcid{0000-0003-0673-5334}.}, \\ Edson Donizete de Carvalho\thanks{The author is with the Department of Mathematics, S\~{a}o Paulo State University, Ilha Solteira, SP, Brazil (e-mail: edson.donizete@unesp.br),\orcid{0000-0003-1016-7632}.} and Antonio Aparecido de Andrade\thanks{The author is with the Department of Mathematics, S\~{a}o Paulo State University, Brazil (e-mail: antonio.andrade@unesp.br),\orcid{0000-0001-6452-2236}.}}

\date{\today}
\maketitle


\begin{abstract}
\noindent Quantum error correction is essential for the development of any scalable quantum computer. In this work we introduce a generalization of a quantum interleaving method for combating clusters of errors in toric quantum error-correcting codes. We present new $n$-dimensional toric quantum codes, where $n\geq 5$, which are featured by lattice codes and apply the proposed quantum interleaving method to such new $n$-dimensional toric quantum codes. Through the application of this method to these novel $n$-dimensional toric quantum codes we derive new $n$-dimensional quantum burst-error-correcting codes. Consequently, $n$-dimensional toric quantum codes and burst-error-correcting quantum codes are provided offering both a good code rate and a significant coding gain when it comes to toric quantum codes. Another important consequence from the presented $n$-dimensional toric quantum codes is that if the Golomb and Welch conjecture in \cite{perfcodes} regarding the Lee sphere in $n$ dimensions for the respective close packings holds true, then it follows that these $n$-dimensional toric quantum codes are the only possible ones to be obtained from lattice codes. Moreover, such a methodology can be applied for burst error correction in cases involving localized errors, quantum data storage and quantum channels with memory.
\end{abstract}

\noindent \textbf{Mathematics Subject Classification (2020)}: 81P45, 94A40, 94B15, 94B20.

\paragraph{Index Terms:} $n$-Dimensional Toric Quantum Code, Hypercubic Lattice, Lattice Code, Quantum Burst-Error-Correction, Quantum Interleaving.

\section{Introduction} 

Quantum error correction is essential for the development of any scalable quantum computer. An open question coming from the work \cite{34D} has been the possibility of generalizing the corresponding results to arbitrary dimensions. Accordingly, our contribution in this work is to provide new $n$-dimensional toric quantum codes, where $n\geq 5$, from lattice codes and, after that, apply a quantum interleaving method to such obtained $n$-dimensional toric quantum codes. By applying such a quantum interleaving method to these new codes we present new $n$-dimensional quantum burst-error-correcting codes. 

In \cite{34D} the authors characterize three and four-dimensional toric quantum codes from lattice codes by using the corresponding three and four-dimensional classical single-error-correcting codes from \cite{perfcodes}. From then on, in this work, such a characterization is generalized to the dimensions $n\geq 5$. In \cite{perfcodes} it is conjectured that the classical single-error-correcting code constructed in $n$ dimensions from the $q^{n}$ hypercubic lattice, where $q=2n+1$, which has $q^{n-1}$ codewords featured by the Lee sphere of radius 1 in dimension $n$ is the only case for which a close-packing exists in dimension $n$. Thus, for all $n\geq 5$, the $n$-dimensional toric quantum codes characterized in this work as lattice codes may be the only toric quantum codes from lattice codes in the respective dimension $n$.

Thenceforward, as in \cite{BombinDelgado}, since these new $n$-dimensional toric quantum codes are provided by using the fundamental region of sublattices of the lattice $\mathbb{Z}^{n}$ (lattice codes) that has hypervolume $q$, then the respective code length denoted by $N$ is decreased and, consequently, the corresponding code rate is higher. 

Furthermore, the coding gain of the $n$-dimensional quantum burst-error correcting codes is higher than the coding gain of the $n$-dimensional toric quantum codes and the code rate of the $n$-dimensional quantum burst-error correcting codes is equal to the code rate of the $n$-dimensional toric quantum codes even with the increase in the codeword length of the $n$-dimensional quantum burst-error correcting codes. Also, the larger the dimension $n$, the larger the coding gain of the $n$-dimensional quantum burst-error correcting codes and more burst errors can be corrected.

Moreover, the authors in \cite{34D} present new three and four-dimensional quantum burst-error-correcting codes by using a quantum interleaving method. The quantum interleaving method presented in this work is a generalization to dimension $n$ of the quantum interleaving method from the works \cite{CibeleQIP,34D} and has a different way to interleave the qubits (faces). Consequently, through the quantum interleaving method provided in this work, the coding gain of the three and four-dimensional quantum burst-error-correcting codes from \cite{34D} becomes higher and the corresponding code rate remains the same.    

This paper is organized as follows. Sections II and III review previous results of lattice theory and toric quantum codes, respectively. Section IV furnishes new $n$-dimensional toric quantum codes from lattice codes. In Section V a generalized quantum interleaving method is applied to such new $n$-dimensional toric quantum codes to obtain new $n$-dimensional quantum burst-error-correcting codes.

\section{Background on Lattice Theory}\label{lattice}

A large class of problems in coding theory is related to the properties of lattices \cite{Conway,ijam1,ijam2,ijam3,ClariceArtigo,ClariceQIP,CibeleQIP}. Let $\{ \bm a_1, \ldots, \bm a_n\}$ be a basis for the $n$-dimensional real Euclidean space $\mathbb R^n$, where $n$ is a positive integer. An $n$-dimensional lattice $\Lambda$ is the set of all points of the form $u_1 \bm a_1 + \cdots + u_n \bm a_n$,  where $u_1,\ldots, u_n\in \mathbb{Z}$. Hence, $\Lambda$ is a discrete additive subgroup of $\mathbb R^n$. This property leads to the study of subgroups (sublattices) and coset decompositions (partitions). An algebraic way to obtain sublattices from lattices is via a scaling matrix $A$ with integer entries. Given a lattice $\Lambda$, a sublattice $\Lambda ^{\prime}=A \Lambda$ can be obtained by transforming each vector $\lambda \in \Lambda$ to $\lambda ^{\prime} \in \Lambda ^{\prime}$ according to $\lambda ^{\prime} = A \lambda$.

Every building block that fills the entire space with one lattice point in each region, when repeated many times, is called a \textit{fundamental region} of the lattice $\Lambda$. There are several ways to choose a fundamental region for a lattice $\Lambda$, however the volume of the fundamental region is uniquely determined by $\Lambda$ \cite{ijam1,ClariceArtigo,forney,Conway}. Let $V(\Lambda)$ denote the volume of a fundamental region of the $n$-dimensional lattice $\Lambda$. For a sublattice $\Lambda ^{\prime} = A \Lambda$, we have that $\dfrac{V(A \Lambda)}{V(\Lambda)} = \ \mid \det A \mid$ and the set of the cosets of $\Lambda'$ in $\Lambda$ defines a \textit{lattice code} \cite{forney}.

\section{Background on Toric Quantum Codes}\label{ToricCodes}

The necessary theoretical background related to toric quantum codes can mostly be found in \cite{Kitaev1, Kitaev}. A quantum error-correcting code (QEC) is the image of a linear mapping from the $2^{k}$-dimensional Hilbert space $H^{k}$ to the $2^{n}$-dimensional Hilbert space $H^{n}$, where $k<n$. The codewords are the vectors in the $2^{n}$-dimensional space. The \textit{minimum distance d} of a quantum error-correcting code $C$ is the minimum distance between any two distinct codewords, that is, the minimum Hamming weight of a nonzero codeword. A quantum error-correcting code $C$ of length $n$, dimension $k$ and minimum distance $d$ is denoted by $[[n,k,d]]$. A code with minimum distance $d$ is able to correct up to $t$ errors, where $t=\lfloor \frac{d-1}{2} \rfloor$ \cite{Shor}. 

A stabilizer code $C$ is the simultaneous eigenspace with eigenvalue 1 comprising all the elements of an Abelian subgroup $S$ of the Pauli group $P_{n}$, called the \textit{stabilizer group}. The elements of the Pauli group on $n$ qubits are given by 
\begin{align*}
P_{n}=\{\pm I, \pm iI, \pm X, \pm iX, \pm Y, \pm iY, \pm Z, \pm iZ\}^{\otimes n}, \ \mathrm{where}
\end{align*}
\begin{align*}
I=\left( \begin{array}{cc}
                    1 & 0 \\
                    0 & 1 \\
                    \end{array}
                    \right), \ X=\sigma_{x}=\left( \begin{array}{cc}
                    0 & 1 \\
                    1 & 0 \\
                    \end{array}
                    \right),
\end{align*}
\begin{equation}
Y=\sigma_{y}=\left( \begin{array}{cc}
                    0 & -i \\
                    i & 0 \\
                    \end{array}
                    \right) \ \mathrm{and} \ Z=\sigma_{z}=\left( \begin{array}{cc}
                    1 & 0 \\
                    0 & -1 \\
                    \end{array}
                    \right).
\end{equation}

Thus, $C=\{ \mid\psi \rangle \in H^{n} \ \mid \ M\mid \psi \rangle = \ \mid\psi \rangle, \ \forall \ M\in S \}$ \cite{Gott}.

Kitaev's toric codes form a subclass of stabilizer codes and they are defined in a $q\times q$ square lattice of the torus (Figure 1), where $q$ is a positive integer. Qubits are in one-to-one correspondence with the edges of the square lattice. The parameters of this class of codes are $[[2q^{2},2,q]]$, where the code length $n$ equals the number of edges $|E|=2q^{2}$ of the square lattice. The number of encoded qubits is dependent on the genus of the orientable surface. In particular, codes constructed from orientable surfaces $gT$ (connected sum of $g$ tori $T$) encode $k=2g$ qubits. Thus, codes constructed from the torus, an orientable surface of genus 1, have $k=2$ encoded qubits. The distance is the minimum between the number of edges contained in the smallest homologically nontrivial cycle of the lattice and the number of edges contained in the smallest homologically nontrivial cycle of the dual lattice. Recall that the square lattice is self-dual and a homologically nontrivial cycle is a path of edges in the lattice which cannot be contracted to a face. Therefore the smallest of these two paths corresponds to the orthogonal axes either of the lattice or of the dual lattice. Consequently, $d=q$ \cite{DennisKitaev}.
\begin{figure}[h]%
\centering
\includegraphics[width=0.6\textwidth]{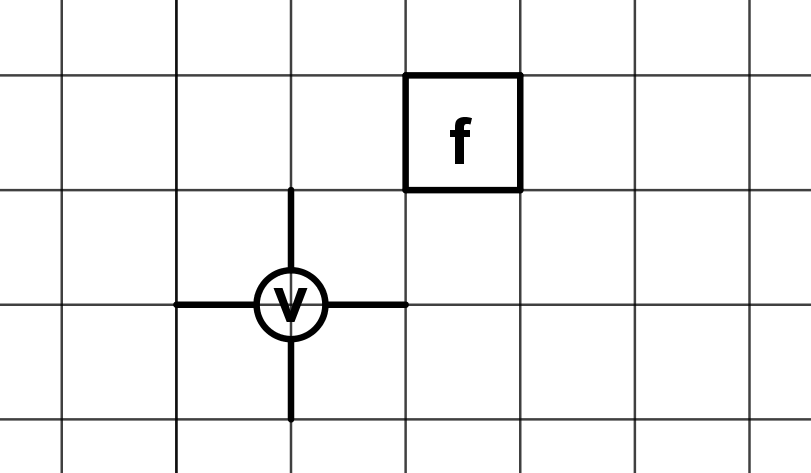}
\caption{Square lattice of the torus, from \cite{ClariceArtigo}.}
\label{fig:1}
\end{figure}

The stabilizer operators are associated with each vertex and each face of the square lattice (lattice) (Figure \ref{fig:1}). Given a vertex $v\in V$, the vertex operator $A_{v}$ is defined by the tensor product of $\sigma_{x}$ -- corresponding to each one of the four edges which have $v$ as a common vertex -- and the operator identity acting on the remaining qubits. Analogously, given a face $f\in F$, the face operator $B_{f}$ is defined by the tensor product $\sigma_{z}$ -- corresponding to each one of the four edges forming the boundary of the face $f$ -- and the operator identity acting on the remaining qubits. In particular, from \cite{ClariceArtigo},  
\begin{equation}    
A_{v}=\bigotimes_{j\in E} \sigma_{x}^{\delta (j\in E_{v})} \ \mathrm{and} \ B_{f}=\bigotimes_{j\in E} \sigma_{z}^{\delta (j\in E_{f})},
\end{equation}
\noindent where $\delta$ is the Kronecker delta. 

The toric code consists of the space fixed by the operators $A_{v}$ and $B_{f}$ and it is given as 
\begin{equation}
C=\{\mid\psi \rangle \in H^{n} \ \mid \ A_{v}\mid\psi \rangle = \ \mid\psi \rangle \ \mathrm{and} \ B_{f} \mid\psi \rangle = \ \mid\psi \rangle , \ \forall \ v,f \}.
\end{equation}

The dimension of $C$ is 4, that is, $C$ encodes $k=2$ qubits.

\section{Characterization of $n$-Dimensional Toric Quantum Codes from  Lattice Codes}\label{ndimensional}

As from \cite{34D} three and four-dimensional toric quantum codes are characterized from lattice codes through the respective three and four-dimensional classical single-error-correcting codes from \cite{perfcodes}. From then on, in this section, such a characterization is generalized to the dimensions $n\geq 5$. In fact, let $n\geq 5$ and $q=2n+1$. The goal of this section is to provide $n$-dimensional toric quantum codes by using the fundamental region of sublattices of the lattice $\mathbb{Z}^{n}$ that has hypervolume $q$, since $q$ divides the hypervolume $q^{n}$ of  $\mathbb{Z}_{q} \times \mathbb{Z}_{q} \times \cdots \times \mathbb{Z}_{q}=\mathbb{Z}_{q}^{n}$ ($n$ times); for the sake of simplicity, we call this region \textit{lattice} or $q^{n}$ \textit{hypercubic lattice}. 

In \cite{perfcodes} the authors construct classical single-error-correcting codes in dimension $n$ in the $q^{n}$ hypercubic lattice which has $q^{n-1}$ codewords. Each codeword of these classical code has the Lee sphere of radius 1 in dimension $n$ as being its fundamental region and the $n$-dimensional hypercube in the center of such Lee sphere corresponds geometrically to the codeword. Therefore each codeword can be featured as an $n$-dimensional central hypercube which has $2 \binom{n}{n-1}$  $(n-1)$-dimensional hypercubes \cite{Coxeter} to which another $n$-dimensional  hypercube has been affixed to each of its $(n-1)$-dimensional hypercubes. In accordance with \cite{perfcodes} it is conjectured that this is the only case for which a close-packing exists in  $n$ dimensions.

Under the algebraic point of view,  the $q^{n}$ hypercubic lattice can be characterized as the group consisting of the cosets of $q \mathbb{Z}^{n}$ in $\mathbb{Z}^{n}$ \cite{Edson}, which in turn is isomorphic to $\mathbb{Z}_{q} \times \mathbb{Z}_{q} \times  \cdots \times \mathbb{Z}_{q} = \mathbb{Z}_{q}^{n}$  ($n$ times), where $q=2n+1$ is a positive integer.  The identifications of the opposite hyperfaces ($(n-1)$-dimensional hypercubes) of the region delimited by $\mathbb{Z}_{q}^{n}$ result in its identification with the $n$-dimensional torus denoted by $T^{n}$ which has genus $g=1$. The hypervolume associated with the lattice $\mathbb{Z}_{q}^{n}$ is $q^{n}$. 

On the condition that each face belongs simultaneously to $2^{(n-2)}$ $n$-dimensional hypercubes of the $q^{n}$ hypercubic lattice (lattice), there are $\binom{n}{2}=\frac{n!}{2!(n-2)!}$ different faces in an $n$-dimensional hypercube (unit cell), that is,  $\binom{n}{2} q^{n}$ different faces in the $q^{n}$ hypercubic lattice (lattice).  In this construction, the qubits are in a biunivocal correspondence with the faces of the $q^{n}$ $n$-dimensional hypercubes of the $q^{n}$ hypercubic lattice. As the $q^{n}$ hypercubic lattice has a finite number of qubits, we use operations modulo $q$ to guarantee that the qubits of a given coset (Lee sphere of radius 1 in $n$ dimensions (fundamental region)) remain inside the $q^{n}$ hypercubic lattice.

The $q^{n}$ hypercubic lattice consists of $q$  $q\times q\times \cdots \times q$ ($(n-1)$ times) = $q^{n-1}$ hypertorus which are $q\times q\times \cdots \times q$ ($(n-1)$ times) = $q^{n-1}$ cross-sections.  It is shown in \cite{ijam,CibeleQIP} that the $q$ codewords of each $q\times q$ cross-section can be generated by the $n$-dimensional  vector $v=(0 \ 0 \ \cdots \ 0 \ 0 \ 1 \ (2n-2))$, that is, we can put them apart by one unit to the right in the horizontal direction and $(2n-2)$ units down in the vertical one.  

Hence, let $j=0,1,2,\cdots,q-1$. Now to rise in the other dimensions to change from one $q\times q$ cross-section to the one above to continue the construction of the $q^{n-2}$ codewords of the corresponding $q^{n-1}$ cross-section we use, respectively, the $n$-dimensional vectors $v_{n-i}$, where $i=n-2,n-3,\ldots,3,2$, as the corresponding generator vectors. Therefore to generate all the $q^{n-2}$ codewords of a $q^{n-1}$ cross-section $j$ (hypertorus) we must use the generator vectors $v_{n-i}$, where $i=n-2,n-3,\ldots,3,2$, and we start the procedure with the null codeword.

Finally to rise in the $n$-th dimension to change from one $q^{n-1}$ cross-section to the other above to continue the construction of the corresponding $q^{n-1}$ codewords we use the $n$-dimensional vector $v_{n-1}=(1 \ 0 \ 0 \ \cdots \ 0 \ 2)$ as the corresponding generator vector. Consequently, by using these $n$-dimensional vectors, we can construct the corresponding $q^{n-1}$ codewords from the $q^{n}$ hypercubic lattice and the $q^{n-1}$ cross-sections can be labeled by the numbers $0,1,2,3,\ldots,q-1$. 

Such generator vectors, that is, $v, v_{2},v_{3},\ldots,v_{n-2},v_{n-1}$, are characterized in the work \cite{ijam}.  By having knowledge of these generator vectors, the following result is obtained.

\begin{theorem}\label{nLatticeCode}
The classical single-error-correcting code from the $q^{n}$ hypercubic lattice, where $q=2n+1$ and $n\geq 5$, can be characterized as a lattice code. 
\end{theorem}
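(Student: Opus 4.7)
The plan is to identify the classical single-error-correcting code of \cite{perfcodes} with the quotient $\mathbb{Z}^n/\Lambda'$, where $\Lambda'$ is a full-rank sublattice of $\mathbb{Z}^n$ explicitly generated by the vectors $v,\,v_2,\,\ldots,\,v_{n-1}$ together with appropriate multiples of $q$. Once $\Lambda'$ is exhibited with the correct index in $\mathbb{Z}^n$, the characterization as a lattice code in the sense of Section~\ref{lattice} follows directly from the coset decomposition recalled there.

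First, I would assemble an $n\times n$ integer matrix $A$ whose rows consist of $v,\,v_2,\,\ldots,\,v_{n-1}$ together with one additional row (for instance $q\,e_k$ for a suitably chosen index $k$) that renders $A$ nonsingular and ensures $q\mathbb{Z}^n\subseteq A\mathbb{Z}^n$. Setting $\Lambda':=A\mathbb{Z}^n$, I would exploit the sparse structure of the generators---each of $v,\,v_2,\ldots,v_{n-2}$ carries only two nonzero entries (namely $1$ in one coordinate and $2n-2$ in another), while $v_{n-1}$ carries $1$ and $2$---to bring $A$ to triangular form via determinant-preserving row and column permutations. This would yield $|\det A|=q$; invoking the relation $V(A\mathbb{Z}^n)/V(\mathbb{Z}^n)=|\det A|$ from Section~\ref{lattice}, I conclude that the fundamental region of $\Lambda'$ has hypervolume $q=2n+1$, which matches exactly the cardinality of the Lee sphere of radius $1$ in $n$ dimensions.

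Second, I would verify that the $q^{n-1}$ integer linear combinations of the generators, reduced modulo $q\mathbb{Z}^n$, reproduce the $q^{n-1}$ codewords obtained by the recursive construction described just before the theorem: starting from the null codeword, iterated translations by $v$ enumerate the $q$ codewords within a fixed $q\times q$ cross-section; successive translations by $v_2,\ldots,v_{n-2}$ pass through the remaining cross-sections of a fixed hypertorus; and translations by $v_{n-1}$ move from one $q^{n-1}$ cross-section to the next. The combinatorial count $|\Lambda'/q\mathbb{Z}^n|=q^{n-1}$ is then consistent with the index $[\mathbb{Z}^n:\Lambda']=q$ already obtained from the determinant computation.

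The main obstacle will be the disjointness (perfect-packing) check: proving directly that no nontrivial $\mathbb{Z}$-combination of $v,v_2,\ldots,v_{n-1}$ lies in a Lee sphere of radius $1$ centered at the origin other than the origin itself. This is precisely the classical perfect-code condition established for this construction in \cite{perfcodes}, and once it is invoked, the $q$ cosets of $\Lambda'$ in $\mathbb{Z}^n$ realize the lattice code in the sense of Section~\ref{lattice}, identifying the classical single-error-correcting code with the lattice code associated with the pair $(\mathbb{Z}^n,\Lambda')$ and completing the argument.
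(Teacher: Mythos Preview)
Your overall strategy---build an integer matrix $A$ from the known generators, compute $|\det A|=q$, and then read off the lattice-code structure from the resulting chain of sublattices---is the same as the paper's. However, there are two concrete problems.

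First, and most importantly, you have the wrong quotient. The classical code has $q^{n-1}$ codewords living inside $\mathbb{Z}_q^{n}$, so it must be identified with $\Lambda'/q\mathbb{Z}^{n}$ (equivalently $A\mathbb{Z}^{n}/q\mathbb{Z}^{n}$), not with $\mathbb{Z}^{n}/\Lambda'$. You seem to recognize this in your second paragraph (``the $q^{n-1}$ integer linear combinations \ldots\ reduced modulo $q\mathbb{Z}^{n}$''), but your opening and closing sentences both assert that the code is the set of $q$ cosets of $\Lambda'$ in $\mathbb{Z}^{n}$, which has the wrong cardinality. The paper resolves this by exhibiting the nested chain $\mathbb{Z}^{n}\supset A\mathbb{Z}^{n}\supset q\mathbb{Z}^{n}$ and taking the lattice code to be $A\mathbb{Z}^{n}/q\mathbb{Z}^{n}$; your final paragraph needs to be rewritten to match this.

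Second, your description of the generators is inaccurate, and this undermines your determinant computation. The vectors $v_{2},\ldots,v_{n-2}$ do \emph{not} each carry only two nonzero entries; according to the characterization in \cite{ijam} (see also the proof of Theorem~\ref{ndistance} in the paper), $v_{n-i}$ has entries equal to $1$ in positions $n-i$ through $n-1$ and a nonzero last coordinate (for instance $v_{n-2}$ has last entry $2n-7$), so the simple two-term triangularization you sketch will not go through. The paper handles this by adjoining not $q\,e_{k}$ but the specific vector $v_{1}=(0,\ldots,0,1,-3)$ as the missing $n$-th row; after a row permutation the matrix becomes block upper triangular with an $(n-2)\times(n-2)$ unit-triangular block and the $2\times2$ block $\left(\begin{smallmatrix}1&2n-2\\1&-3\end{smallmatrix}\right)$, whose determinant is $-(2n+1)=-q$. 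Your choice of $q\,e_{k}$ could be made to work for a suitable $k$, but you would still need the correct form of the $v_{j}$ to carry out the reduction, and you would separately have to check that \emph{every} $q\,e_{j}$ lies in $A\mathbb{Z}^{n}$, which does not follow from inserting a single such row.

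Finally, the ``disjointness (perfect-packing) check'' you flag as the main obstacle is not part of what the theorem asserts; that property is the content of \cite{perfcodes} and is simply used, not re-proved, here.
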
  
\begin{proof}
As of \cite{CibeleQIP} the corresponding authors provide for $q=2n+1$, where $n\geq 5$, the $2\times 2$ matrix $\left(
                 \begin{array}{cc}
                   1 & 2n-2 \\
                   1 & -3 \\
                 \end{array}
               \right)$ whose row vectors generate the $q$ codewords of each $q\times q$ cross-section from the $q^{n-1}$ cross-sections (hypertorus). Thus, since the vectors $v_{2},v_{3},\ldots,v_{n-2},v_{n-1}$ are other $n-2$ generator vectors of the classical single-error-correcting code from the $q^{n}$ hypercubic lattice, we complete this matrix with these generators to obtain the $n\times n$ matrix 
\begin{equation}
A=\left(
                 \begin{array}{c}
                   v \\
                   v_{1} \\
                   v_{2} \\
                   v_{3} \\
                   \vdots \\
                   v_{n-2} \\
                   v_{n-1} \\
                 \end{array}
               \right)
\end{equation}           
to the dimension $n$  that, consequently, generates the sublattice $A \mathbb{Z}^{n}$ from $\mathbb{Z}^{n}$, where $v_{1}=(0 \ 0 \ \cdots \ 0 \ 0 \ 1 \ -3)$ is an $n$-dimensional vector. 
               
Now we show that $ \mid \det A \mid =q$. In fact, as already mentioned, the generator vectors $v, v_{2},v_{3},\ldots,v_{n-2},v_{n-1}$ are characterized in the work \cite{ijam}, Section 8, and Table 8 from \cite{ijam} shows such generators for $n=2,3,4,5,6,7,8$.    

Consequently, by permuting the rows of matrix $A$ we obtain matrix 
\begin{equation}
A'=\left(
                 \begin{array}{c}
                   v _{n-1}\\
                   v_{2} \\
                   v_{3} \\
                   \vdots \\
                   v_{n-2} \\
                   v \\
                   v_{1} \\
                 \end{array}
               \right)
\end{equation}               
and, through the characterization of the corresponding generating vectors \cite{ijam}, we have that matrix $A'$ is an upper triangular square matrix in blocks with $A_{1}$ and $A_{2}$ being the respective diagonal blocks, where $A_{1}$ is a $(n-2)\times (n-2)$ square upper triangular matrix whose entries on the diagonal are equal to 1 and $A_{2}=\left(
                 \begin{array}{cc}
                   1 & 2n-2 \\
                   1 & -3 \\
                 \end{array}
               \right)$.  From there, $\mid \det A \mid = \mid \det A' \mid = \mid \det A_{1} . \det A_{2} \mid = 2n+1 = q$. 

Therefore, since $ \mid \det A \mid =q$, $\dfrac{V(A \mathbb{Z}^{n})}{V(\mathbb{Z}^{n})} = \mid \det A \mid = q$.  As $V(\mathbb{Z}^{n})=1$, then $V(A \mathbb{Z}^{n})=q$, that is, the volume of the fundamental region of the sublattice $A \mathbb{Z}^{n}$ of the lattice $\mathbb{Z}^{n}$ is $q$. Thus, it is possible to observe that $\dfrac{V(q \mathbb{Z}^{n})}{V(\mathbb{Z}^{n})}=q^{n}>q=\dfrac{V(A \mathbb{Z}^{n})}{V(\mathbb{Z}^{n})}$ and, consequently, $V(q \mathbb{Z}^{n})=q^{n}>q=V(A \mathbb{Z}^{n})$. 

Hereupon it is possible to show that $A \mathbb{Z}^{n}\supset q \mathbb{Z}^{n}$, in fact: as the codewords of the classical single-error-correcting code in the $q^{n}$ hypercubic lattice are constructed through the sublattice $A\mathbb{Z}^{n}$ modulo $q$, then the null codeword is the equivalence class modulo $q$ which is formed by $q\mathbb{Z}^{n}$. Hence, $A \mathbb{Z}^{n}\supset q \mathbb{Z}^{n}$.  

Now since $V(q \mathbb{Z}^{n})=q^{n}>q=V(A \mathbb{Z}^{n})$ and $A \mathbb{Z}^{n}\supset q \mathbb{Z}^{n}$, then we have the following nested lattice chain
\begin{equation}\label{chain2}
\mathbb{Z}^{n}\supset A \mathbb{Z}^{n}\supset q \mathbb{Z}^{n}.
\end{equation}

From \cite{forney} we obtain $\mid \mathbb{Z}^{n} / q \mathbb{Z}^{n} \mid =\dfrac{V(q \mathbb{Z}^{n})}{V(\mathbb{Z}^{n})}=q^{n}$ and $\mid \mathbb{Z}^{n} / A \mathbb{Z}^{n} \mid=\dfrac{V(A \mathbb{Z}^{n})}{V(\mathbb{Z}^{n})}=q$, consequently, $\mid A \mathbb{Z}^{n} / q \mathbb{Z}^{n} \mid=\dfrac{V(q \mathbb{Z}^{n})}{V(A \mathbb{Z}^{n})}=\dfrac{q^{n}}{q}=q^{n-1}$. Therefore the lattice quotient $A \mathbb{Z}^{n} / q \mathbb{Z}^{n}$ is the group consisting of the $q^{n-1}$ cosets of $q \mathbb{Z}^{n}$ in $A \mathbb{Z}^{n}$ and the set of these $q^{n-1}$ cosets defines a lattice code. 

Now as the row vectors of the matrix $A$ which generates the sublattice $A \mathbb{Z}^{n}$ are the generators of the classical code from the $q^{n}$ hypercubic lattice and the lattice quotient $A \mathbb{Z}^{n} / q \mathbb{Z}^{n}$ provides operations modulo $q$ over the respective vectors, then there exists a natural group isomorphism between the classical code and the lattice code $A \mathbb{Z}^{n} / q \mathbb{Z}^{n}$, consequently, such a classical code is characterized by the lattice code $A \mathbb{Z}^{n} / q \mathbb{Z}^{n}$. 
\end{proof}

Through these lattice codes, where $n\geq 5$,  new $n$-dimensional toric quantum codes can be characterized; in fact, since the $\mathbb{Z}^{n}$-lattice is self-dual, the qubits are in a biunivocal correspondence with the faces of the $q^{n}$ $n$-dimensional hypercubes of the $q^{n}$ hypercubic lattice and one associates a $X$-parity check acting on cubes, and a $Z$-parity check acting on edges. Thenceforward, as in \cite{BombinDelgado}, the respective code length denoted by $N$ is decreased and it is given by the number of faces of the Lee sphere of radius 1 in $n$ dimensions, that is, $N= \binom{n}{2} q$, since such a Lee sphere has hypervolume $q=2n+1$ and there are $\binom{n}{2}=\frac{n!}{2!(n-2)!}$ different faces in an $n$-dimensional hypercube which is a unit cell. 

The code dimension is $k=\binom{n}{2}$, since these $n$-dimensional toric quantum codes are constructed on the $T^{n}$ torus which has genus $g=1$ \cite{Kitaev}. As the $\mathbb{Z}^{n}$-lattice is self-dual, then the code distance is defined as the minimum number of faces in the $q^{n}$ hypercubic lattice between two codewords and, therefore, such a code distance is characterized as the minimum \textit{Mannheim distance} which is given by $d_{M} = min \{ \mid x_{1} \mid + \mid x_{2} \mid + \cdots + \mid x_{n} \mid \ \ \mid \ (x_{1},x_{2},\ldots,x_{n}) \in \mathcal{C} \}$, where $\mathcal{C}$ indicates the set of the $q^{n-1}$ codewords and $w_{M} (x_{1},x_{2},\ldots,x_{n}) = \mid x_{1} \mid + \mid x_{2} \mid + \cdots + \mid x_{n} \mid$ is known as the \textit{Mannheim weight} of $(x_{1},x_{2},\ldots,x_{n})$. 

\begin{theorem}\label{ndistance}
The minimum Mannheim distance of the respective $n$-dimensional toric quantum codes is given by $3$.
\end{theorem}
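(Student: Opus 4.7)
The plan is to establish $d_M=3$ by the standard sandwich: first exhibit a concrete nonzero codeword of Mannheim weight $3$ to get $d_M\le 3$, then rule out Mannheim weights $1$ and $2$ using the perfect Lee-sphere packing invoked throughout Section IV to get $d_M\ge 3$.

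For the upper bound, I would point to the generator vector $v_{n-1}=(1\ 0\ 0\ \cdots\ 0\ 2)$ introduced explicitly in Section IV. This is a row of the generator matrix $A$, so it represents a nonzero coset in the lattice code $A\mathbb{Z}^{n}/q\mathbb{Z}^{n}$ from Theorem \ref{nLatticeCode}. Since $q=2n+1\ge 11$ for $n\ge 5$, the entries $1$ and $2$ already lie in the balanced residue range $(-q/2,q/2]$, so no modular reduction shortens them further. Hence $w_M(v_{n-1})=|1|+|2|=3$, which shows $d_M\le 3$.

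For the lower bound, I would reuse the perfect-packing interpretation of the classical $q^{n}$ code from \cite{perfcodes}, which Section IV identifies with the lattice code $A\mathbb{Z}^{n}/q\mathbb{Z}^{n}$. In that packing, every point of $\mathbb{Z}_{q}^{n}$ lies in a unique Lee sphere of radius $1$ around some codeword. If there existed a nonzero codeword $c$ with $w_M(c)\le 2$, one could construct an intermediate vector $x\in\mathbb{Z}_{q}^{n}$ at Lee distance $\le 1$ from both $0$ and $c$, contradicting uniqueness of the covering codeword. Equivalently, single-error-correction in the Lee/Mannheim metric forces minimum Lee distance $\ge 3$. Combining the two bounds gives $d_M=3$.

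The step I expect to be slightly delicate is making the weight computation compatible with reduction modulo $q$: one has to check that none of the coordinates of $v_{n-1}$ wrap around the circle $\mathbb{Z}_q$ in a way that forces the Mannheim weight above $3$, and conversely that representative vectors of low Lee weight cannot appear from some other $\mathbb{Z}$-linear combination of the rows of $A$ after reduction. The first is immediate from $q\ge 11$; the second is handled uniformly, and cleanly, by the single-error-correcting property rather than by a case analysis of short lattice vectors, which is why the packing argument in \cite{perfcodes} is the right tool to cite.
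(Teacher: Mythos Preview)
Your proposal is correct, and the upper bound via $v_{n-1}=(1\ 0\ \cdots\ 0\ 2)$ coincides with the paper's. The lower bound, however, is obtained differently. The paper argues structurally: it cites Theorem~2 of \cite{CibeleQIP} to get that each $q\times q$ cross-section has minimum Mannheim distance $4$, and then checks from the explicit description in \cite{ijam} that the lifting generators $v_{2},\ldots,v_{n-1}$ (which carry codewords between cross-sections) all have Mannheim weight at least $3$; combining these two facts yields $d_M=3$. Your route instead invokes the perfect single-error-correcting property of the Golomb--Welch code directly, which uniformly forbids any nonzero codeword of Lee weight $\le 2$ via the disjointness of radius-$1$ Lee spheres. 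Your argument is more self-contained (it does not rely on the external Theorem~2 of \cite{CibeleQIP}) and handles all nonzero codewords at once without splitting into the in-cross-section and between-cross-section cases; the paper's argument, by contrast, reveals more of the layered structure of the code and pinpoints that the weight-$3$ witnesses arise precisely from the lifting generators rather than from within a single cross-section.
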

\begin{proof}
From \cite{CibeleQIP}, the row vectors of the $2\times 2$ matrix $\left(
                 \begin{array}{cc}
                   1 & 2n-2 \\
                   1 & -3 \\
                 \end{array}
               \right)$ generate the $q$ codewords of each $q\times q$ cross-section from the $q^{n-1}$ cross-sections (hypertorus). On the other hand, in \cite{CibeleQIP}, Theorem 2 provides the minimum Mannheim distance of the codewords of each $q\times q$ cross-section which is given by 4, for $n\geq 5$.   

As we have seen previously we use the vectors $v_{2},v_{3},\ldots,v_{n-2},v_{n-1}$ which are the corresponding generators in the vertical directions to rise in the three, four, five up to $n$ dimensions, respectively, that is, to change from one cross-section to the other above, respectively, to continue the construction of the corresponding $q^{n-1}$ codewords.         

As already mentioned, the generator vectors $v_{2},v_{3},\ldots,v_{n-2},v_{n-1}$ are characterized in the work \cite{ijam}, Section 8, and Table 8 from \cite{ijam} shows such generators for $n=2,3,4,5,6,7,8$. From there, the last coordinate of the vector $v_{n-2}$ is given by $2n-7\geq 3$, for $n\geq 5$. Furthermore,  for vectors $v_{n-i}$, where $i=n-2,n-3,\ldots,3,2$, their coordinates at positions between $n-i$ and $n-1$ are equal to 1 and, consequently, their minimum Mannheim weight is greater than or equal to 3. 

From then on, as long as the minimum Mannheim distance of the codewords of each $q\times q$ cross-section is given by 4 and the minimum Mannheim weight of $v_{2},v_{3},\ldots,v_{n-2},v_{n-1}$ is 3, then the minimum Mannheim distance of the corresponding $n$-dimensional toric quantum codes is given by $3$.
\end{proof}

As a consequence, the parameters of the new $n$-dimensional toric quantum codes, where $n\geq 5$, are $[[N= \binom{n}{2} q,k=\binom{n}{2},d_{M}=3]]$.

Consequently, under the algebraic point of view, such new $n$-dimensional toric quantum codes can be characterized as the group consisting of the cosets of $q \mathbb{Z}^{n}$ in $A \mathbb{Z}^{n}$. 

As the $\mathbb{Z}^{n}$-lattice is self-dual, then the respective quantum channel without memory is symmetric since the duality of the lattice in which a toric quantum code is constructed influences the error correction pattern \cite{LivroBombin}.

\section{New $n$-Dimensional Quantum Burst-Error-Correcting Codes}

The works \cite{CibeleQIP,34D} present new toric quantum codes and a quantum interleaving method which is used over these new codes in 2 and, 3 and 4 dimensions, respectively. Moreover, the quantum interleaving method coming from \cite{34D} is a generalization to dimensions 3 and four of the quantum interleaving method from the work \cite{CibeleQIP}.  

In this way, by continuing these two works \cite{CibeleQIP,34D}, the corresponding quantum interleaving methods proposed in \cite{CibeleQIP,34D}, respectively, are generalized in this work to dimensions $n\geq 5$ and the corresponding quantum interleaving method presented in this section to dimension $n$ has a different way to interleave the qubits (faces). Thus, by applying it to the new $n$-dimensional toric quantum codes from Section \ref{ndimensional}, we obtain new $n$-dimensional quantum burst-error correcting codes which have higher code rate and coding gain when it comes to toric quantum codes. Also, this section presents and discusses the relevant comparisons which are organized in Tables \ref{T1} and \ref{T2}. Next Theorem provides the parameters of these new $n$-dimensional quantum burst-error correcting codes.

\begin{theorem}
Consider $n\geq 5$ and $q=2n+1$, where $n$ is the dimension of the $q^{n}$ hypercubic lattice. The combination of the $n$-dimensional toric quantum codes obtained in Section \ref{ndimensional} and the interleaving technique results in $n$-dimensional interleaved toric quantum codes with parameters $[[\alpha q^{n},\alpha q^{n-1},t_{i}=q^{2}]]$, where $\alpha=\binom{n}{2}$ and $t_{i}$ is the interleaved toric quantum code error correcting capability.
\end{theorem}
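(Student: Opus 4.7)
The plan is to split the argument into three pieces corresponding to the three parameters of the interleaved code. I would first establish the total length. Since the $q^n$ hypercubic lattice is built from $q^n$ unit $n$-hypercubes, each carrying $\binom{n}{2}=\alpha$ distinct faces, and qubits are in biunivocal correspondence with faces (the $\mathbb{Z}^n$-lattice being self-dual), the total number of qubits available across the whole lattice is $\alpha q^n$. This is the length of the interleaved code.

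For the dimension, I would invoke Theorem \ref{nLatticeCode}: the lattice code $A\mathbb{Z}^n/q\mathbb{Z}^n$ partitions the $q^n$ unit hypercubes into $q^{n-1}$ disjoint Lee-spheres of radius $1$, and each Lee-sphere carries an independent copy of the $n$-dimensional toric quantum code of parameters $[[\alpha q,\alpha,3]]$ guaranteed by Theorem \ref{ndistance}. Because the interleaved code is a direct sum of these $q^{n-1}$ constituent toric codes at the level of encoded qubits, its dimension is $\alpha\cdot q^{n-1}$, matching the claim.

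The substantive part is the burst-error correcting capability $t_i=q^2$. Here I would make the proposed $n$-dimensional interleaving rule explicit: a linear ordering of the $\alpha q^n$ qubits obtained by sweeping through the $q^{n-1}$ Lee-sphere codewords and through the $\alpha q$ faces of each Lee-sphere in a pattern that differs from the $3$- and $4$-dimensional orderings of \cite{34D}. The property to verify is that, in this ordering, any window of $q^2$ consecutive qubits intersects $q^2$ distinct Lee-sphere blocks. Once that is established, the fact that each block is a single-error-correcting quantum code forces $t_i\ge q^2$, and since the largest window with that separation property is exactly $q^2$ by construction, equality holds.

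The main obstacle will be the last step: constructing the interleaving trajectory in general dimension $n$ and proving the window-separation property. I expect this to reduce to a combinatorial statement about the generator vectors $v, v_1, v_2, \ldots, v_{n-1}$ modulo $q$ characterized in \cite{ijam}: the minimum number of positions separating two qubits of a common Lee-sphere codeword in the ordering should be controlled by the Mannheim weights of these generators, and the estimates already used in the proof of Theorem \ref{ndistance} — combined with the geometric fact that each $q\times q$ cross-section hosts $q$ codewords — should give the required lower bound of $q^2$.
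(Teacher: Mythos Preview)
Your treatment of the length $\alpha q^{n}$ and the dimension $\alpha q^{n-1}$ is fine and agrees with the paper: the interleaved code is indeed the product of $q^{n-1}$ copies of the $[[\alpha q,\alpha,3]]$ toric code, one per Lee sphere.

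The gap is in the burst-correction step, and it is twofold. First, your error model is not the paper's. The paper does \emph{not} claim that an arbitrary window of $q^{2}$ consecutive qubits in a linear ordering is correctable; what it shows is that the burst pattern consisting of one Lee-sphere-shaped cluster of $q$ errors in each of the $q$ hypertorus cross-sections $j=0,\ldots,q-1$ is correctable. The argument is purely geometric: the explicit interleaving placement spreads the $\alpha q^{n-2}$ qubits of each block of $q^{n-3}$ logical codewords onto the \emph{same} hypercube of every physical Lee sphere in cross-section $j$, so a single physical Lee sphere of errors, after deinterleaving, hits $q$ distinct logical codewords, each exactly once. Multiplying by the $q$ cross-sections gives $t_{i}=q\cdot q=q^{2}$. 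No linear window property is needed or proved. Second, even if one wanted your stronger window statement, the proposed mechanism---Mannheim weights of the generators $v,v_{2},\ldots,v_{n-1}$ controlling the separation in the ordering---does not connect. Mannheim weight measures lattice distance in $\mathbb{Z}_{q}^{n}$; separation in a linear ordering is governed by the \emph{cycle structure} of those generators in the quotient $A\mathbb{Z}^{n}/q\mathbb{Z}^{n}$, i.e.\ by their additive orders and how the enumeration nests the blocks, not by their Hamming-type weight. The bound $w_{M}\geq 3$ used in Theorem~\ref{ndistance} says nothing about how long it takes the enumeration to revisit a codeword, so that route will not produce the required $q^{2}$ lower bound without an entirely separate argument.
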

\begin{proof}
The parameters of the $n$-dimensional toric quantum codes, where $n\geq 5$, presented in Section \ref{ndimensional} are given by $[[N= \alpha q,k=\alpha,d_{M}=3]]$, where $\alpha=\binom{n}{2}$. A toric quantum code with minimum distance $d_{M}$ is able to correct up to $t$ errors, where $t=\lfloor \frac{d_{M}-1}{2} \rfloor$ \cite{ClariceArtigo}, therefore, such codes are able to correct up to $t=1$ error. 

Note that the clusters of errors have the shape of the Lee sphere of radius 1 in $n$ dimensions and the qubits are in a biunivocal correspondence with the faces of the $q^{n}$ hypercubic lattice. Figure \ref{fig9} shows the representation of the storage system under analysis.
\begin{figure}[h]%
\centering
\includegraphics[width=0.9\textwidth]{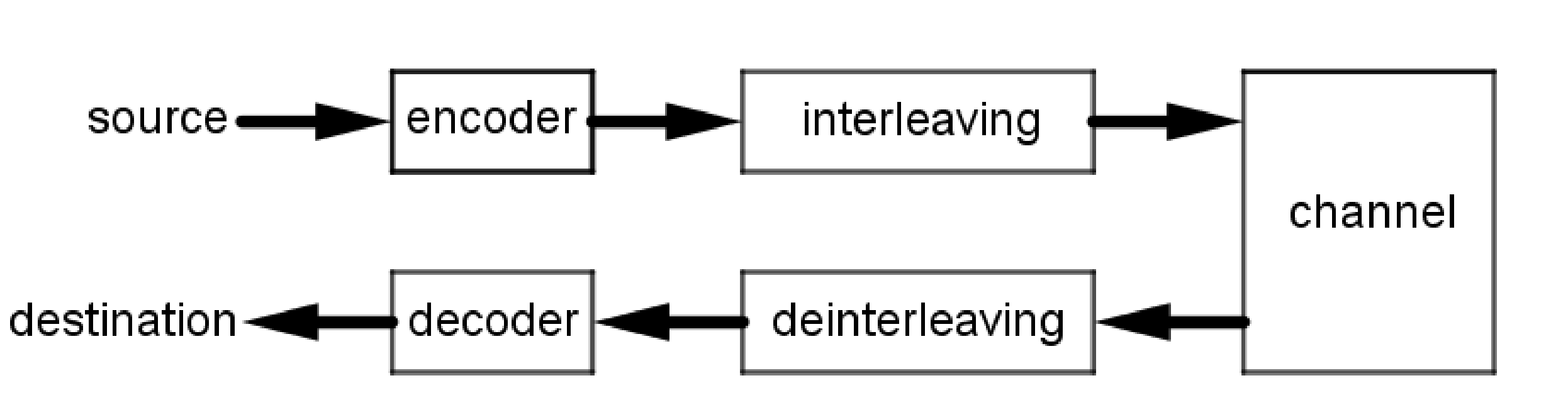}
\caption{Representation of an storage system \cite{34D}.}
\label{fig9}
\end{figure}

Consider the $q^{n}$ hypercubic lattice, where $n\geq 5$ and $q=2n+1$. Such a hypercubic  lattice has a total of $\alpha q^{n}$ qubits (faces), where $\alpha=\binom{n}{2}$. From now on we explain how to spread such $\alpha q^{n}$ adjacent qubits in the $q^{n}$ hypercubic lattice. Observe that the $n$-dimensional toric quantum codes characterized in Section \ref{ndimensional} from lattice codes in the $q^{n}$ hypercubic lattice consist of $q^{(n-1)}$ codewords whose fundamental region is the Lee sphere of radius 1 in $n$ dimensions which has $q$ hypercubes and $\alpha q$ qubits.   

Thereby the first $\alpha q^{(n-1)}$ adjacent qubits which are related to the codewords of the cross-section $j=0$ are spread over the $q^{(n-2)}$ codewords of the cross-section $j=0$ as it follows: firstly, ordering such adjacency consists of using the order of the codewords of the cross-section $j=0$ where such an order is featured by the generator vectors that are used in the generation of the codewords of each cross-section $j$ ($j=0,1,\ldots,q-1$); observe that these generator vectors are the same for generating the corresponding codewords of each cross-section $j$. Thenceforward, the $\alpha q$ qubits of the null codeword are placed on the hypercubes that correspond to the first $q$ codewords of the $q^{(n-2)}$ codewords of the cross-section $j=0$, the first $q$ qubits of the null codeword are placed on the hypercubes that correspond to the first $q$ codewords of the $q^{(n-2)}$ codewords of the cross-section $j=0$ and the other $\alpha-1$ blocks of $q$ qubits are placed in the same way on the same hypercubes (related to the first $q$ codewords of the $q^{(n-2)}$ codewords of the cross-section $j=0$).   

Now for each $\alpha q$ qubits of the next $q^{(n-3)}-1$ codewords of the cross-section $j=0$, these $\alpha q$ qubits are placed in an analogous manner on the hypercubes that correspond to the next $q$ codewords of the $q^{(n-2)}$ codewords of the cross-section $j=0$. Thus, it is observed that the $\alpha q^{(n-2)}$ qubits of the first $q^{(n-3)}$ codewords of the cross-section $j=0$ are spread exactly on the hypercubes that correspond to the $q^{(n-2)}$ codewords of the cross-section $j=0$. 

Therewith, as the corresponding fundamental region is the Lee sphere of radius 1 in $n$ dimensions which has $q$ hypercubes by including the one related to the codeword, performing such interleaving procedure for the next $q-1$ blocks of $q^{(n-3)}$ codewords of the cross-section $j=0$ and starting the corresponding arrangement by placing, for each block $q-1$, the first qubit of the $\alpha q^{(n-2)}$ qubits on one of the $q-1$ hypercubes related to the Lee sphere of radius 1 of the null codeword (the hypercube related to the null codeword was used to spread the qubits of the first block) and, then, we use the generator lattice vectors of the codewords of each cross-section $j$ to follow the sequence of the respective codewords of the cross-section $j=0$ to spread the $\alpha q^{(n-2)}$ qubits exactly on the same hypercube of these codewords. 

Now the other $q-1$ blocks with $\alpha q^{(n-1)}$ adjacent qubits which are related to the codewords of each $q^{n-1}$ cross-section $j=1,2,\ldots,q-1$, respectively, are spread analogously across the $q^{(n-2)}$ codewords of each cross-section $j$. Also, observe that the codeword of  each $q^{n-1}$ cross-section $j=1,2,\ldots,q-1$ which starts the spreading of the respective $\alpha q^{(n-1)}$ adjacent qubits is that featured by the $n$-dimensional generator lattice vector $v_{n-1}=(1 \ 0 \ 0 \ \cdots \ 0 \ 2)$ which is used to rise in the $n$-th dimension to change from one $q^{n-1}$ cross-section to the other above in the construction of  the $q^{n-1}$ codewords from the $q^{n}$ hypercubic lattice. 

Therefore, in this way, the $\alpha q^{n}$ adjacent qubits are spread across the $q^{n}$ hypercubic lattice. Notice that it is needed to follow the sequence of the codewords of the respective code constructed over the $q^{n}$ hypercubic lattice by using all the generator lattice vectors. 

As the $\alpha q^{(n-2)}$ qubits of every block of $q^{(n-3)}$ codewords of each $q^{n-1}$ cross-section $j$ are spread exactly over the same hypercube of the Lee spheres of radius 1 in $n$ dimensions that correspond to the $q^{(n-2)}$ codewords of the $q^{n-1}$ cross-section $j$, then, if we assume that a cluster of $q$ errors in each $q^{n-1}$ cross-section $j$ has the shape of the Lee sphere of radius 1 in $n$ dimensions, since each $q^{n-1}$ cross-section has $q$ blocks of $q^{(n-3)}$ codewords and the $n$-dimensional toric quantum codes presented in Section \ref{ndimensional} are able to correct up to $t=1$ error, such an interleaving technique shows that when the deinterleaving process is applied each one of these $q$ errors occurs in a different codeword of the $q^{n-1}$ cross-section $j$ and then the respective $n$-dimensional toric quantum code is applied to correct these $q$ errors in burst. Consequently, as $j=0,1,2,\cdots,q-1$, such an interleaving technique shows that up to $q^{2}$ errors in burst can be corrected over the $q^{n}$ hypercubic lattice. 

Accordingly, in this section, it is shown that the combination of the $n$-dimensional toric quantum codes provided in Section \ref{ndimensional} whose parameters are $[[N= \alpha q,k=\alpha,d_{M}=3]]$, where $\alpha=\binom{n}{2}$,  and the presented interleaving method results in $n$-dimensional interleaved toric quantum codes whose parameters are given by $[[\alpha q^{n},\alpha q^{n-1},t_{i}=q^{2}]]$, where $t_{i}$ is the interleaved toric quantum code error correcting capability. As a consequence, new $n$-dimensional quantum burst-error correcting codes are obtained by applying this interleaving method to the respective $n$-dimensional toric quantum codes. \end{proof}

The code rate \cite{rates} and the coding gain \cite{rates} are given by $R=\dfrac{k}{N}$ and $G=\dfrac{k}{N}(t+1)$, respectively, where $t$ is the toric quantum code error correcting capability. 

In \cite{perfcodes} it is conjectured that the classical single-error-correcting code constructed in $n$ dimensions from the $q^{n}$ hypercubic lattice which has $q^{n-1}$ codewords featured by the Lee sphere of radius 1 in dimension $n$ is the only case for which a close-packing exists in dimension $n$. Thus, for all $n\geq 5$, the $n$-dimensional toric quantum codes characterized as lattice codes in Section \ref{ndimensional} may be the only toric quantum codes from lattice codes in the respective dimension $n$.    

Table \ref{T1} shows the code rate and the coding gain (in dB) of the $n$-dimensional toric quantum codes provided in Section \ref{ndimensional}.  Thenceforward, as in \cite{BombinDelgado}, since the $n$-dimensional toric quantum codes are provided by using the fundamental region of sublattices of the lattice $\mathbb{Z}^{n}$ (lattice codes) that has hypervolume $q$, then the respective code length denoted by $N$ is decreased and, consequently, the corresponding code rate is higher.

\begin{table}[h!]
\begin{center}
\begin{minipage}{174pt}
\caption{Code rate and coding gain of the $n$-dimensional toric quantum codes from Section \ref{ndimensional}} \label{T1}%
\begin{center}
\begin{tabular}{@{}lll@{}}
\toprule
$n$-D Toric Quantum Codes& Code Rate & Coding Gain \\
$[[N= \alpha q,k=\alpha,d_{M}=3]]$ & $R=\dfrac{1}{q}=\dfrac{1}{2n+1}$ & $G=\dfrac{2}{q}=\dfrac{2}{2n+1}$, \mbox{dB} \\
\midrule
$[[110,k=10,3]]$ ($n=5$) & 0.09091 & 0.18182 \\
$[[195,k=15,3]]$ ($n=6$) & 0.07692 & 0.15385 \\ 
$[[315,k=21,3]]$ ($n=7$) & 0.06667 & 0.13333 \\
$[[476,k=28,3]]$ ($n=8$) & 0.05882 & 0.11765 \\
\bottomrule
\end{tabular}
\end{center}
\end{minipage}
\end{center}
\end{table}

Table \ref{T2} shows the equivalent interleaved $n$-dimensional toric quantum codes ($n$-dimensional quantum burst-error-correcting codes) and their corresponding interleaving code rates $R_{i}$ and coding gains $G_{i}$ (in dB) when a cluster of errors in each $q^{n-1}$ cross-section $j$ has the shape of the corresponding Lee sphere of radius 1 over the $q^{n}$ hypercubic lattice.  

Notice that the coding gain of the $n$-dimensional quantum burst-error correcting codes is higher than the coding gain of the $n$-dimensional toric quantum codes provided in Section \ref{ndimensional} and the code rate of the $n$-dimensional quantum burst-error correcting codes is equal to the code rate of the $n$-dimensional toric quantum codes provided in Section \ref{ndimensional} even with the increase in the codeword length of the $n$-dimensional quantum burst-error correcting codes. Note that the larger $n$, the larger $q$, and therefore the larger the dimension $n$, the larger the coding gain $G_{i}=q+\dfrac{1}{q}$ of the $n$-dimensional quantum burst-error correcting codes and more burst errors can be corrected.

Moreover, the authors in \cite{34D} present new three and four-dimensional quantum burst-error-correcting codes by using a quantum interleaving method. As mentioned previously, the quantum interleaving method presented in this section is an extension to dimension $n$ of the quantum interleaving method from the works \cite{CibeleQIP,34D} and has a different way to interleave the qubits (faces). Consequently, through the quantum interleaving method provided in this section, the coding gain of the three and four-dimensional quantum burst-error-correcting codes from \cite{34D} becomes higher and the corresponding code rate remains the same.

\begin{table}[h!]
\begin{center}
\begin{minipage}{174pt}
\caption{Code rate and coding gain of the $n$-dimensional quantum burst-error-correcting codes from the interleaving method}\label{T2}%
\begin{center}
\begin{tabular}{@{}lll@{}}
\toprule
Interleaved $n$-D & Code Rate & Coding Gain \\
Toric Quantum Codes & $R_{i}=\dfrac{k}{N}=\dfrac{1}{q}$ & $G_{i}=\dfrac{k}{N}(t_{i}+1)$, \mbox{dB} \\
\midrule
$[[10 q^{5},10 q^{4},t_{i}=q^{2}]]$ ($n=5$, $q=11$) & 0.09091 & 11.09102 \\
$[[15 q^{6},15 q^{5},t_{i}=q^{2}]]$ ($n=6$, $q=13$) & 0.07692 & 13.0764 \\ 
$[[21 q^{7},21 q^{6},t_{i}=q^{2}]]$ ($n=7$, $q=15$) & 0.06667 & 15.06742 \\
$[[28 q^{8},28 q^{7},t_{i}=q^{2}]]$ ($n=8$, $q=17$) & 0.05882 & 17.0578 \\ 
\bottomrule
\end{tabular}
\end{center}
\end{minipage}
\end{center}
\end{table}

\section{Conclusion}

In this work it is shown that the combination of the $n$-dimensional toric quantum codes characterized by lattice codes and the provided quantum interleaving method results in $n$-dimensional interleaved toric quantum codes which are new $n$-dimensional quantum burst-error correcting codes and, to the best of our knowledge, little is known regarding interleaving techniques for combating cluster of errors in toric quantum codes. Besides, a single quantum burst-error-correcting code in $n$ dimensions that has a good code rate and coding gain when it comes to toric quantum codes is needed to correct the respective errors in burst, while in the classical interleaving techniques \cite{ijam,celso} several classical error-correcting codes are needed to correct the corresponding burst errors. The quantum interleaving method from this work is a generalization of those from the works \cite{CibeleQIP,34D} to dimension $n$ and it has a different way to interleave the qubits (faces) which furnishes a higher coding gain and error correcting capability. The quantum interleaving in both this work and in \cite{CibeleQIP,34D}, the qubits are allocated in each fundamental region of the $n$-dimensional toric quantum code and not sequentially, as occurs in classical interleaving \cite{ijam,celso}.

\section{Acknowledgments}

The authors would like to thank the financial Brazilian agency CNPq (Conselho Nacional de Desenvolvimento Cient\'{i}fico e Tecnol\'{o}gico - Brazil) for the funding support and under grants no. 101862/2022-9 and no. 305239/2020-1.

\end{document}